\documentclass[letterpaper, 10 pt, twocolumn]{article}

\usepackage{cite}
\usepackage{amsmath,amssymb,amsfonts,amsthm}
\usepackage{graphicx}

\usepackage{mathtools}
\usepackage{balance}
\usepackage{comment}
\usepackage[dvipsnames]{xcolor}
\usepackage{cancel}
\usepackage{xfrac}
\usepackage{soul} 
\usepackage[ruled,vlined,linesnumbered]{algorithm2e} 
\usepackage{subcaption}  
\DontPrintSemicolon
\usepackage{booktabs} 

\definecolor{BlueMATLAB}{rgb}{0, 0.447, 0.741}
\definecolor{RedMATLAB}{rgb}{0.85, 0.325, 0.098}
\definecolor{YellowMATLAB}{rgb}{0.929, 0.694, 0.125}
\definecolor{GreenMATLAB}{rgb}{0.466, 0.674, 0.188}

\setlength{\textwidth}{7.0in}         
\setlength{\oddsidemargin}{-0.25in}     
\setlength{\evensidemargin}{-0.25in}    
\setlength{\topmargin}{-0.5in}          
\setlength{\textheight}{9.0in}          
\setlength{\columnsep}{0.25in} 

\newcommand{\myldots}{\kern-0.05em.\kern-0.01em.\kern-0.01em.\kern0.01em}
\newcommand{\R}{\mathbb{R}}
\newcommand{\I}{\mathbb{I}}

\newcommand{\norm}[1]{\left\lVert#1\right\rVert}
\newcommand{\fe}{\mathsf{f}}
\newcommand{\ve}{\mathrm{v}}

\newcommand{\me}{\mathrm{m}}

\newcommand\scalemath[2]{\scalebox{#1}{\mbox{\ensuremath{\displaystyle #2}}}}

\newtheorem{corollary}{Corollary}
\newtheorem{remark}{Remark}

\newtheorem{proposition}{Proposition}
\newtheorem{assumption}{Assumption}

\usepackage[final]{hyperref} 
\hypersetup{
colorlinks=true, 
linkcolor=black, 
citecolor=black, 
filecolor=magenta, 
urlcolor=black         
}

\newcommand{\email}[1]{\href{mailto:#1}{#1}}


\title{\LARGE \bf Efficient Configuration-Constrained Tube MPC via Variables Restriction and Template Selection}

\author{Filippo Badalamenti, Sampath Kumar Mulagaleti, Mario Eduardo Villanueva,\\Boris Houska, Alberto Bemporad\thanks{F.~Badalamenti (Corresponding Author, \email{filippo.badalamenti@imtlucca.it}), S.K.~Mulagaleti, M.E.~Villanueva and A.~Bemporad are with IMT School for Advanced Studies Lucca, Piazza San Francesco 19, 55100, Italy. B.~Houska is with ShanghaiTech University, China. This work has received support from the European Research Council (ERC), Advanced Research Grant COMPACT (Grant Agreement No. 101141351)}}

\date{}

\begin{document}

\maketitle
\thispagestyle{empty}
\pagestyle{empty}

\begin{abstract}
Configuration-Constrained Tube Model Predictive Control (CCTMPC) offers flexibility by using a polytopic parameterization of invariant sets and the optimization of an associated vertex control law. This flexibility, however, often demands computational trade-offs between set parameterization accuracy and optimization complexity. This paper proposes two innovations that help the user tackle this trade-off. First, a structured framework is proposed, which strategically limits optimization degrees of freedom, significantly reducing online computation time while retaining stability guarantees. This framework aligns with Homothetic Tube MPC (HTMPC) under maximal constraints. Second, a template refinement algorithm that iteratively solves quadratic programs is introduced to balance polytope complexity and conservatism. Simulation studies on an illustrative benchmark problem as well as a high-dimensional ten-state system demonstrate the approach’s efficiency, achieving robust performance with minimal computational overhead. The results validate a practical pathway to leveraging CCTMPC’s adaptability without sacrificing real-time viability.
\end{abstract}

\section{Introduction}
Tube Model Predictive Control (TMPC) has proven to be a central tool in robust control due to its ability to enclose all realizations of the state of an uncertain system within a sequence of bounded sets~\cite{Houska2019,Kouvaritakis2015,Rawlings2015}. Among the various set parameterizations, a variety of polytopic set parameterization approaches are particularly attractive, because they permit the online solution of a Quadratic Programming (QP) problem~\cite{houska2024polyhedralcontroldesigntheory}. Examples for such polytopic schemes include Rigid~\cite{MAYNE_RakovicRigidTMPC2005219,Implicit_rakovic2023}, Homothetic~\cite{HTMPC_RAKOVIC20121631}, and Elastic~\cite{ETMPC_7525471, Fleming2015} Tube MPC.
While these methods typically enforce set invariance through linear feedback laws, previous works have also explored the use of a freely parameterized control law within the Homothetic framework~\cite{Langson2004}.

Departing from conventional parameterization methods, Configuration-Constrained Tube MPC (CCTMPC)~\cite{CCTMPC_VILLANUEVA2024111543} introduces a novel framework centered on selecting polytopic sets with fixed facet orientations that share a predefined geometric structure. This approach optimizes facet displacements while explicitly enumerating the vertices of the polytopes and their corresponding control inputs, inherently supporting systems with multiplicative uncertainty.

Despite its merits, CCTMPC presents dual challenges. First, its formulation inherently involves a larger number of optimization variables—encompassing both control inputs and set geometry—compared to methods that predefine either component. This complexity inevitably grows with system dimensionality. Second, the control scheme’s efficacy hinges critically on the template design, particularly in high-dimensional spaces. Here, achieving a balance between template expressiveness and computational tractability becomes nontrivial, as overly simplistic templates risk conservatism, while intricate ones incur prohibitive online costs.

\textit{Contribution:} This paper advances the state-of-the-art in CCTMPC through two novel methods that enable computational efficiency and template design flexibility.

Here, we first introduce a novel family of Tube MPC controllers that systematically reduce computational overhead. By strategically constraining the number of optimization variables, our approach transitions from the full flexibility of CCTMPC to a highly efficient scheme reminiscent of Homothetic Tube MPC (HTMPC). Moreover, we propose an iterative algorithm for designing configuration templates. This algorithm is initialized with a low-complexity template. It progressively enriches the template by solving QPs, systematically increasing the complexity until a user-defined target is achieved.

\textit{Outline}:  
The paper is structured as follows:
\begin{itemize}
\item Section~\ref{sec:ProbForm} formalizes the problem and establishes the necessary mathematical framework.
\item Section~\ref{sec:TubeSchemes} derives the proposed family of reduced-complexity Tube MPC schemes, discussing their theoretical properties.
\item Section~\ref{sec:TemplateCompute} details the adaptive template refinement procedures and NLP formulation.
\item Section~\ref{sec:NumExamples} validates the approach through numerical simulations, including numerical results for a 10-state quadcopter model.
\item Section~\ref{sec:Conclusions} summarizes the contributions and outlines directions for future research.
\end{itemize}

\textit{Notation}: The symbol $\I_n$ denotes the identity matrix of order $n$. A block-diagonal matrix composed of matrices $Q_{1},\dots,Q_{n}$ is denoted by $\mathrm{blkd}(Q_{1},\dots,Q_{n})$. The operator $\operatorname{convh}$ denotes the convex hull, and $\norm{z}_Q^2 \coloneqq z^\top Qz$. Given compact sets $\mathcal{A} \subseteq \mathcal{B} \subset \R^n$, the operator $\mathrm{d}_{\mathcal{B}}(\mathcal{A})$ measures the $2$-norm Hausdorff distance between the sets.

\section{Tube Model Predictive Control} \label{sec:ProbForm}
This section briefly reviews the main idea of Tube MPC and introduces the notation used throughout this article.

\subsection{Uncertain Linear Systems}
This paper is concerned with uncertain discrete-time systems of the form
\begin{align}
\label{eq:system}
x_{k+1} = A_k x_k + B_k u_k + w_k,
\end{align}
where \(x_k \in \mathcal{X} \subseteq \mathbb{R}^{n_x}\) and \(u_k \in \mathcal{U} \subseteq \mathbb{R}^{n_u}\) denote the state and input at time \(k \in \mathbb{N}\), while \(\mathcal{X}\) and \(\mathcal{U}\) are given, closed and convex, state and control constraint sets. Moreover, \(w_k \in \mathcal{W}\) is an additive disturbance, with \(\mathcal{W}\) assumed to be compact and convex. The matrices \(A_k\) and \(B_k\) are uncertain, but known to satisfy 
\[
(A_k,B_k) \in \Delta\coloneqq\operatorname{convh}( (A_1,B_1),\myldots,(A_\me,B_\me) ).
\]
Here, the matrices \((A_i,B_i)\) are assumed to be given vertices of the matrix polytope \(\Delta\) that models the uncertainty set.

\subsection{Robust Control Tubes}
Let \(\mathcal{F}(X)\) denote the set of sets that can be reached from a given set \(X \subseteq \R^{n_x}\); that is,
\begin{align*}
\mathcal{F}(X) := \left\{ X^+ \subseteq \R^{n_x} \ \middle| \ 
\begin{aligned} 
&\forall \, x \in X, \ \exists \, u \in \mathcal{U} : \vspace{1pt} \\
&\forall \, (A,B) \in \Delta, \; w \in \mathcal{W}, \vspace{1pt}  \\
&Ax+Bu+w \in X^+
\end{aligned} 
\right\}.
\end{align*}
We recall that a sequence of sets
\(X_0, X_1,\ldots \subseteq \mathcal X\) is called a feasible Robust Control Tube (RCT) if
\begin{align}
\label{eq::RCT}
\forall k \in \{0,\dots,N-1\},\quad X_{k+1} \in \mathcal{F}(X_k).
\end{align}
Moreover, a set \(X \subseteq \mathcal X\) is called a feasible Robust Control Invariant (RCI) set if \((X,X)\) is an RCT.

\subsection{Optimization of RCTs}
From a conceptual perspective, the main idea of Tube MPC is to optimize an RCT online. Here, the first element of the RCT is enforced to contain the current state measurement. For example, one can solve optimization problems of the form
\begin{align}
\label{eq::TMPC}
\begin{array}{cl}
\displaystyle\min_{X_0,\ldots,X_N} & L_0(X_0) + \displaystyle\sum_{k=0}^{N-1} L(X_k) + L_N(X_N) \vspace{3pt}\\
\text{s.t.} & \left\{
\begin{array}{l}
\forall k \in \{0,\dots,N-1\}, \\
X_{k+1} \in \mathcal{F}(X_k), \\
x \in X_0,\ X_k \subseteq \mathcal X.
\end{array}
\right.
\end{array}
\end{align}
Here, \(L_0\) denotes the initial cost, \(L\) the stage cost, \(L_N\) the terminal cost, and \(x\) the current state measurement. Choices for the cost triple \((L_0, L, L_N)\) that ensure closed-loop stability of such Tube MPC schemes are discussed in~\cite{Villanueva2020}. Alternate formulations may also introduce control-law penalties designed to regularize the control policies, which are hidden in our definition of \(\mathcal F\); for an in-depth discussion, see~\cite{houska2024polyhedralcontroldesigntheory}.

\section{Polytopic Tube MPC Schemes}\label{sec:TubeSchemes}
As the optimization variables of the generic Tube MPC controller~\eqref{eq::TMPC} are set-valued, this problem is in general intractable. Therefore, this section focuses on polytopic set parameterizations.

\subsection{Configuration-Constrained Polytopes}
Throughout this paper, we use the notation
\begin{align}
\label{eq:CCPolytope}
P(y)\coloneqq\{x\in \mathbb{R}^{n_x} \mid Fx \leq y\},
\end{align}
to denote polyhedra with a given facet matrix \(F\in\mathbb{R}^{\fe \times n_x}\). Here, we assume that \(Fx\leq 0\) implies \(x = 0\), so that \(P(y)\) is bounded—and hence potentially empty—for each \(y \in \mathbb{R}^\fe\). In the following, we further refer to the cone
\[
\mathcal{E} := \{y \in \R^{\fe} \mid Ey \leq 0\}
\]
as a configuration cone for \(P\) if the face configuration of the parameterized polytope \(P(y)\) remains invariant for all \(y \in \mathcal{E}\). This means that if \(\mathcal{E}\) is a configuration cone, we can find a set of matrices $V:=\{V_j \mid j \in \{1,\cdots,\ve\}\}$ such that
\begin{align}
\label{eq:vertex_config}
P(y) = \mathrm{convh}(\{V_j y \mid j \in \{1,\cdots,\ve\}\})
\end{align}
for all \(y \in \mathcal{E}\). As already mentioned in the introduction, constructing suitable configuration triples \((F,E,V)\) with the above properties is a non-trivial task. One possible method for constructing such triples can nevertheless be found in~\cite[Sec III.5]{CCTMPC_VILLANUEVA2024111543}. Note that the implications of freezing such a configuration and the methods for refining configuration triples will be discussed below.

\subsection{Configuration-Constrained RCTs}
One of the key advantages of working with configuration-constrained polytopic RCTs, for a given configuration triple \((F,E,V)\), is that they can be represented by a convex condition. Namely, by introducing the convex set
\begin{align}
\label{eq:S_definition}
\mathbb{S}\coloneqq
\left\{ \begin{pmatrix}
    y \\ u \\ y^+
\end{pmatrix} \ \middle| \ 
\begin{aligned}
&\forall \,(i,j) \in \{1,\myldots,\me\} \times \{1,\myldots,\ve\}, \vspace{1pt}\\
& F(A_i V_j y + B_i u_j) + d \leq y^+, \vspace{1pt}\\
& Ey \leq 0, \ V_j y \in \mathcal{X}, \ u_j \in \mathcal{U} 
\end{aligned} 
\right\},
\end{align}
with \(d_k\coloneqq\max\{F_k w : w\in \mathcal{W}\}\) for all \(k \in \{1,\myldots,\fe\}\) and with \(u_j \in \mathcal U\) denoting the \(j\)-th vertex control input, the following statement holds.

\begin{proposition}
\label{prop:RFIT_CC}
Let \(y_0,y_1, \ldots \in \mathcal E\) be a given sequence of parameters. Then \(P(y_0),P(y_1), \ldots\) is a feasible RCT if and only if there exists a sequence of inputs \(u_0,u_1, \ldots \in \R^{\ve \cdot n_u}\) such that \((y_k,u_k,y_{k+1}) \in \mathbb{S}\) for all \(k \in \mathbb{N}\).
\end{proposition}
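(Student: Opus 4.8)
The plan is to prove the proposition by unwinding the definition of a feasible RCT through the reachable-set operator $\mathcal{F}$, and showing that, when all the parameters $y_k$ lie in the configuration cone $\mathcal{E}$, the abstract set-containment condition $P(y_{k+1}) \in \mathcal{F}(P(y_k))$ is \emph{equivalent} to the finite system of linear inequalities collected in $\mathbb{S}$. The argument is naturally split into the two implications, but both rest on the same core observation: since $y_k \in \mathcal{E}$, the vertex representation \eqref{eq:vertex_config} applies, so $P(y_k) = \mathrm{convh}(\{V_j y_k\})$, and a point $x \in P(y_k)$ satisfies $Fx \le y_k$.

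First I would establish the ``if'' direction. Assume $(y_k,u_k,y_{k+1}) \in \mathbb{S}$ for all $k$. The constraints $Ey_k \le 0$ give $y_k \in \mathcal{E}$, so each $P(y_k)$ is a well-defined configuration-constrained polytope with vertices $V_j y_k$, and the constraints $V_j y_k \in \mathcal{X}$ together with convexity of $\mathcal{X}$ yield $P(y_k) \subseteq \mathcal{X}$. To show $P(y_{k+1}) \in \mathcal{F}(P(y_k))$, fix any $x \in P(y_k)$; write $x = \sum_j \lambda_j V_j y_k$ as a convex combination, and pick the control input $u := \sum_j \lambda_j u_{k,j} \in \mathcal{U}$ (convexity of $\mathcal{U}$ and $u_{k,j} \in \mathcal{U}$). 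For any $(A,B) \in \Delta$ and $w \in \mathcal{W}$, write $(A,B) = \sum_i \mu_i (A_i,B_i)$; then $Ax + Bu + w$ is a double convex combination $\sum_{i,j} \mu_i \lambda_j (A_i V_j y_k + B_i u_{k,j}) + w$, and applying $F$ row-wise, using $F(A_i V_j y_k + B_i u_{k,j}) + d \le y_{k+1}$ and $Fw \le d$ (by definition of $d$), gives $F(Ax+Bu+w) \le y_{k+1}$, i.e.\ $Ax+Bu+w \in P(y_{k+1})$. This is exactly the condition $P(y_{k+1}) \in \mathcal{F}(P(y_k))$.

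Next I would establish the ``only if'' direction. Suppose $P(y_0), P(y_1), \ldots$ is a feasible RCT with all $y_k \in \mathcal{E}$. Feasibility gives $P(y_k) \subseteq \mathcal{X}$; since each vertex $V_j y_k \in P(y_k)$ (by \eqref{eq:vertex_config}), we get $V_j y_k \in \mathcal{X}$. For the dynamic constraint, apply $P(y_{k+1}) \in \mathcal{F}(P(y_k))$ at the particular state $x = V_j y_k \in P(y_k)$: there exists $u_{k,j} \in \mathcal{U}$ with $A V_j y_k + B u_{k,j} + w \in P(y_{k+1})$ for all $(A,B) \in \Delta$, $w \in \mathcal{W}$; specializing to $(A,B) = (A_i, B_i)$ and taking the worst-case $w$ row by row yields $F(A_i V_j y_k + B_i u_{k,j}) + d \le y_{k+1}$. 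Collecting the $u_{k,j}$ into $u_k \in \R^{\ve \cdot n_u}$ shows $(y_k, u_k, y_{k+1}) \in \mathbb{S}$. One subtlety to handle carefully here is that the definition of $\mathcal{F}$ quantifies ``$\exists u$'' \emph{per state $x$}, so a priori the input could depend on $x$; the point is that it suffices to extract inputs only at the finitely many vertices $V_j y_k$, because — as the ``if'' direction shows — those vertex inputs, interpolated convexly, certify the containment for \emph{all} $x \in P(y_k)$.

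The main obstacle, and the step deserving the most care, is precisely this interchange between the vertex-wise conditions in $\mathbb{S}$ and the ``for all $x \in P(y_k)$'' quantifier in $\mathcal{F}$: one must verify that the bilinear-looking expression $A_i V_j y_k + B_i u_{k,j}$ behaves correctly under the double convex combination over both the uncertainty vertices $i$ and the polytope vertices $j$, and that the disturbance bound $d$ — defined facet-wise as $d_k = \max\{F_k w : w \in \mathcal{W}\}$ — correctly upper-bounds $Fw$ componentwise for every $w \in \mathcal{W}$, so that robustness over $\mathcal{W}$ reduces to the single vector inequality involving $d$. Once these convexity/worst-case reductions are in place, the equivalence with the half-space description $P(y_{k+1}) = \{x : Fx \le y_{k+1}\}$ is immediate, and the per-$k$ statements assemble into the claimed characterization over all $k \in \mathbb{N}$.
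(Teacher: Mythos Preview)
Your argument is correct. Both implications are handled properly: the ``if'' direction correctly interpolates the vertex control law over $P(y_k)$ and uses the double convex-combination structure of $\Delta$ together with the facet-wise support vector $d$ to push the constraint through; the ``only if'' direction correctly instantiates the existential quantifier in $\mathcal{F}$ at each vertex $V_j y_k$ and takes the row-wise worst case over $\mathcal{W}$ to recover the inequality involving $d$. The subtlety you flag---that the control in $\mathcal{F}$ is state-dependent, but it suffices to extract it at the vertices---is exactly the point, and your remark that the ``if'' direction already certifies this is the clean way to close the loop.

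The paper, by contrast, does not give a self-contained proof at all: it simply invokes \cite[Corollary~4]{CCTMPC_VILLANUEVA2024111543} and \cite[Prop.~1]{Houska2024} and stops. So your route is genuinely different in presentation: you supply an elementary, from-first-principles argument using only convexity and the vertex/facet duality encoded in the configuration triple, whereas the paper defers the work to prior literature. What your approach buys is transparency and independence from those references---a reader can verify the equivalence directly from the definitions in the paper; what the paper's approach buys is brevity and a pointer to where the result sits in the broader theory of configuration-constrained polytopes.
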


\textit{Proof:} The statement follows by combining the results from~\cite[Corollary 4]{CCTMPC_VILLANUEVA2024111543} and~\cite[Prop.~1]{Houska2024}.\qed

\begin{corollary}
\label{cor::RCI_CC}
A polytope \(P(y)\) with \(y \in \mathcal E\) is an RCI set if and only if there exists a \(u\in \R^{\ve \cdot n_u}\) such that \((y,u,y) \in \mathbb{S}\).
\end{corollary}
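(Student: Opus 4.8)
The plan is to derive Corollary~\ref{cor::RCI_CC} directly from Proposition~\ref{prop:RFIT_CC} by specializing the RCT definition to a constant sequence. Recall that a set $X \subseteq \mathcal{X}$ is, by definition, a feasible RCI set precisely when the pair $(X,X)$ is an RCT, i.e.\ when the constant sequence $X_0 = X_1 = \cdots = X$ satisfies the RCT condition~\eqref{eq::RCT}. So the first step is to set $X = P(y)$ with $y \in \mathcal{E}$ and observe that, since $\mathcal{E}$ is a configuration cone, the constant parameter sequence $y_0 = y_1 = \cdots = y$ lies in $\mathcal{E}$ and generates exactly this constant polytope sequence.

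Next I would invoke Proposition~\ref{prop:RFIT_CC} with this constant sequence: $P(y),P(y),\ldots$ is a feasible RCT if and only if there exists a sequence of inputs $u_0,u_1,\ldots \in \R^{\ve \cdot n_u}$ with $(y_k,u_k,y_{k+1}) = (y,u_k,y) \in \mathbb{S}$ for all $k \in \mathbb{N}$. For the ``if'' direction of the corollary, a single $u$ with $(y,u,y) \in \mathbb{S}$ yields the constant input sequence $u_k \coloneqq u$, which satisfies the hypothesis of the proposition, hence $P(y)$ is RCI. For the ``only if'' direction, if $P(y)$ is RCI then the proposition provides \emph{some} admissible input sequence; picking any single element, say $u \coloneqq u_0$, gives $(y,u,y) \in \mathbb{S}$, which is the claimed condition.

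The only subtlety — and the one point that warrants care rather than genuine difficulty — is the logical direction in which the input is quantified: the proposition asserts existence of an entire sequence $(u_k)_k$, whereas the corollary asserts existence of a single $u$. The ``only if'' direction must extract one input from a sequence (trivial: take any term), and the ``if'' direction must inflate one input into a sequence (trivial: repeat it, using that $\mathbb{S}$ is time-invariant and that the same triple $(y,u,y)$ works at every step). Since $\mathbb{S}$ in~\eqref{eq:S_definition} is defined by stationary constraints, this constant-sequence construction is legitimate, and no further conditions on $w$, $\mathcal{X}$, or $\mathcal{U}$ enter beyond what Proposition~\ref{prop:RFIT_CC} already encodes. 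I would therefore expect the proof to be two or three sentences long, essentially ``apply Proposition~\ref{prop:RFIT_CC} to the constant sequence $y_k \equiv y$ and $u_k \equiv u$.''
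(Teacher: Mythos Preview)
Your proposal is correct and matches the paper's own proof essentially verbatim: the paper simply states that the result follows from Proposition~\ref{prop:RFIT_CC} and the definition of RCI sets, which is exactly the constant-sequence specialization you describe. Your additional remark about extracting one input from the sequence (and, conversely, repeating a single input) is the only detail one could add, and it is handled correctly.
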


\textit{Proof:} The result follows from Proposition~\ref{prop:RFIT_CC} and the definition of RCI sets.\qed

\subsection{Optimal Configuration-Constrained RCIs}
If \(\ell(\cdot,\cdot)\) is a given convex stage cost, one can use the result from Corollary~\ref{cor::RCI_CC} to compute optimal configuration-constrained RCI polytopes \(P(y_m)\). Here, \(y_m \in \mathcal{E}\) is computed by solving the convex optimization problem
\begin{equation} \label{eq:optimal RCI}
\begin{aligned}
(y_m, u_m) \in \arg\min_{y,u} &\ \ell(y,u)\\
\text{s.t.} &\ (y,u,y)\in\mathbb{S}.
\end{aligned}
\end{equation}
If it is possible to find an explicit expression for the stage cost \(L\) in~\eqref{eq::TMPC} such that \(\ell(y,u) = L(P(y))\), then \(\ell\) does not depend on \(u\). In many practical cases, however, one is interested in refining the objective by introducing a suitable control penalty; thus, \(\ell\) is often modeled as a strongly convex quadratic function, cf.~\cite[Section 2.D]{badalamenti2024CCTMPCTrack}, so that the set of optimizers is a singleton.

\subsection{Fully-parameterized CCTMPC for Tracking}

One way of implementing Tube MPC is to compute a sequence of parameters \(((y_0,u_0),(y_1,u_1),\cdots)\) that converges to the optimal RCI set parameters \((y_m,u_m)\). In the simplest case, this can be achieved by minimizing the quadratic deviation \(\|(y-y_m,u-u_m)\|_Q^2\). For example, one can solve the strictly convex QPs
\begin{equation}
\label{eq:CCTMPC_scheme}
\begin{aligned}
\min_{\mathbf{y},\mathbf{u}}\quad&\sum_{k=0}^{N-1} \norm{ \begin{bmatrix} y_k - y_m \\ u_k-u_m \end{bmatrix}}_Q^2+ \norm{\begin{bmatrix} y_N-y_m \\ u_N-u_m \end{bmatrix}}_R^2
\\
\text{s.t.}\hspace{10pt} &
\left\{
\begin{array}{l}
\forall k \in \{0, \ldots, N - 1 \}, \\
(y_k,u_k,y_{k+1}) \in \mathbb{S}, \\
Fx \leq y_0,\\
(y_N,u_N,\gamma y_N + (1-\gamma)y_m) \in \mathbb{S}\,,
\end{array}
\right.
\end{aligned} 
\end{equation}
online, where \(\mathbf{y}=(y_{0},\myldots,y_N)\) and \(\mathbf{u}=(u_0,\myldots,u_N)\) stack the sequence of tube parameters and vertex control inputs, respectively. Here, \(Q\) and \(R\) are symmetric positive definite matrices in \(\R^{(\fe+\ve\cdot n_u)\times(\fe+\ve\cdot n_u)}\). The parameter \(\gamma \in (0,1)\) is user-specified, and is useful in ensuring recursive feasibility and stability of the control scheme~\cite{badalamenti2024CCTMPCTrack}. Note that Problem~\eqref{eq:CCTMPC_scheme} is parametric in the current state measurement \(x\). We denote the unique parametric minimizers of~\eqref{eq:CCTMPC_scheme} by \((\mathbf{y}^*(x),\mathbf{u}^*(x))\), and define the control law as
\begin{equation}
\label{eq:CCTMPC_control_law}
\begin{aligned}
\mu_{\mathrm{c}}(x):=\arg&\min_{u \in \mathcal{U}} \ u^{\top}u \\ 
&\ \text{s.t.} \
\left\{
\begin{array}{l}
\forall i \in \{ 1, \myldots, \me \},\\
F(A_ix+B_iu)+d \leq y^*_1(x).
\end{array}
\right.
\end{aligned}
\end{equation}

Due to Proposition~\ref{prop:RFIT_CC}, Problem~\eqref{eq:CCTMPC_control_law} is feasible whenever Problem~\eqref{eq:CCTMPC_scheme} is feasible, ensuring that the minimizer \((\mathbf{y}^*(x),\mathbf{u}^*(x))\) exists. The complete closed-loop system is then given by
\begin{align}
\label{eq:CL_CCTMPC}
\forall k \in \mathbb N, \qquad x_{k+1} = A_k x_k + B_k \mu_{\mathrm{c}}(x_k) + w_k.
\end{align}
We recall the following result from~\cite[Corollary 1]{badalamenti2024CCTMPCTrack} regarding the theoretical properties of the controller.
\begin{proposition}
\label{prop:CCTMPC}
Let \(\gamma \in (0,1)\), \(Q\succ0\) and \(R \succ 0\) be given. If \(Q+\gamma^2 R \preceq R\), then the following statements hold independently of the uncertainty realization, \(w_k \in \mathcal{W}\), and \((A_k,B_k) \in \Delta\), for all \(k \in \mathbb{N}\):
\begin{enumerate}
\item If Problem~\eqref{eq:CCTMPC_scheme} is feasible with \(x=x_0\), then it is feasible for \(x=x_k\), where \(x_k\) is the state at time instant \(k\) of System~\eqref{eq:CL_CCTMPC}, for all \(k \in \mathbb{N}\).
\item After defining the function 
$$\mathcal{L}(\mathbf{y},\mathbf{u}):=\sum_{k=0}^{N-1} \norm{ \begin{bmatrix} y_k - y_m \\ u_k-u_m \end{bmatrix}}_Q^2+ \norm{\begin{bmatrix} y_N-y_m \\ u_N-u_m \end{bmatrix}}_R^2,$$
the Lyapunov inequality 
$$\mathcal{L}(\mathbf{y}^*(x_{k+1}),\mathbf{u}^*(x_{k+1})) \leq \mathcal{L}(\mathbf{y}^*(x_{k}),\mathbf{u}^*(x_{k}))$$ 
holds, where equality holds only at \(x_k \in P(y_m)\). Moreover, we have
\begin{align*}
\scalemath{0.95}{\underset{k \to \infty}{\lim} (\mathbf{y}^*(x_k),\mathbf{u}^*(x_k)) = ((y_m,\myldots,y_m),(u_m,\myldots,u_m)).}
\end{align*}
\item The distance between the state in~\eqref{eq:CL_CCTMPC} and the optimal RCI set converges to \(0\); that is, $\underset{k \to \infty}{\lim} x_k \in P(y_m)$.
\end{enumerate}
\end{proposition}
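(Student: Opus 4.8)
\noindent\textit{Proof sketch:} The plan is to run the classical Tube MPC value-function recursion in the configuration-constrained setting, relying only on Proposition~\ref{prop:RFIT_CC}, Corollary~\ref{cor::RCI_CC}, the convexity of $\mathbb S$, and the $\gamma$-structure of the terminal constraint in~\eqref{eq:CCTMPC_scheme}. For part~1, assume~\eqref{eq:CCTMPC_scheme} is feasible at a state $x_k$ with minimizer $(\mathbf y^*(x_k),\mathbf u^*(x_k))$, so that $\mu_{\mathrm{c}}(x_k)$ is well defined. Since $\mu_{\mathrm{c}}(x_k)$ satisfies $F(A_ix_k+B_i\mu_{\mathrm{c}}(x_k))+d\le y_1^*(x_k)$ for every vertex $i$, taking the convex combination that defines $(A_k,B_k)$ and using $Fw\le d$ for all $w\in\mathcal W$ yields $Fx_{k+1}\le y_1^*(x_k)$. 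I would then check that the shifted candidate $\tilde{\mathbf y}=(y_1^*(x_k),\myldots,y_N^*(x_k),\,\gamma y_N^*(x_k)+(1-\gamma)y_m)$ and $\tilde{\mathbf u}=(u_1^*(x_k),\myldots,u_N^*(x_k),\,\gamma u_N^*(x_k)+(1-\gamma)u_m)$ is feasible at $x_{k+1}$: its stage triples are either $\mathbb S$-triples of $(\mathbf y^*(x_k),\mathbf u^*(x_k))$ or the terminal triple at $x_k$, its initial constraint reads $Fx_{k+1}\le\tilde y_0$, and its new terminal triple $(\tilde y_N,\tilde u_N,\gamma\tilde y_N+(1-\gamma)y_m)$ equals the convex combination, with weights $\gamma$ and $1-\gamma$, of that terminal triple with $(y_m,u_m,y_m)\in\mathbb S$ (Corollary~\ref{cor::RCI_CC}), hence lies in $\mathbb S$ by convexity. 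Part~1 then follows by induction on $k$.

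For part~2, set $z_j:=(y_j^*(x_k)-y_m,\,u_j^*(x_k)-u_m)$. The candidate above satisfies $\tilde z_j=z_{j+1}$ for $j<N$ and $\tilde z_N=\gamma z_N$, so expanding $\mathcal L$ gives $\mathcal L(\tilde{\mathbf y},\tilde{\mathbf u})=\mathcal L(\mathbf y^*(x_k),\mathbf u^*(x_k))-\norm{z_0}_Q^2+\norm{z_N}_{Q+\gamma^2R-R}^2$. The hypothesis $Q+\gamma^2R\preceq R$ makes the last term nonpositive, so optimality at $x_{k+1}$ gives $\mathcal L(\mathbf y^*(x_{k+1}),\mathbf u^*(x_{k+1}))\le\mathcal L(\mathbf y^*(x_k),\mathbf u^*(x_k))-\norm{z_0}_Q^2$, establishing the Lyapunov inequality. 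Since $Q\succ0$, equality forces $z_0=0$, i.e.\ $y_0^*(x_k)=y_m$, and with $Fx_k\le y_0^*(x_k)$ this gives $x_k\in P(y_m)$; conversely, when $x_k\in P(y_m)$ the constant trajectory $(y_m,u_m)$ is feasible with cost $0$ and is therefore the unique minimizer, so equality holds exactly on $P(y_m)$.

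For the limits in parts~2 and~3, observe that $V(x_k):=\mathcal L(\mathbf y^*(x_k),\mathbf u^*(x_k))$ is nonincreasing and bounded below, hence convergent, and summing the decrease bound gives $\sum_k\norm{z_0(x_k)}_Q^2<\infty$, so $(y_0^*(x_k),u_0^*(x_k))\to(y_m,u_m)$. As $\mathcal L$ is a positive-definite quadratic it is strongly convex with modulus $\mu:=2\min\{\lambda_{\min}(Q),\lambda_{\min}(R)\}>0$; since $(\tilde{\mathbf y},\tilde{\mathbf u})$ is feasible at $x_{k+1}$ with $\mathcal L(\tilde{\mathbf y},\tilde{\mathbf u})-V(x_{k+1})\le V(x_k)-V(x_{k+1})\to0$, the first-order optimality condition at $(\mathbf y^*(x_{k+1}),\mathbf u^*(x_{k+1}))$ yields $\norm{(\mathbf y^*(x_{k+1}),\mathbf u^*(x_{k+1}))-(\tilde{\mathbf y},\tilde{\mathbf u})}^2\le\tfrac{2}{\mu}\big(V(x_k)-V(x_{k+1})\big)\to0$, so $y_j^*(x_{k+1})-y_{j+1}^*(x_k)\to0$ and $u_j^*(x_{k+1})-u_{j+1}^*(x_k)\to0$ for every $j$. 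An induction on $j=0,\myldots,N$ then propagates $(y_0^*(x_k),u_0^*(x_k))\to(y_m,u_m)$ to $(\mathbf y^*(x_k),\mathbf u^*(x_k))\to((y_m,\myldots,y_m),(u_m,\myldots,u_m))$. Finally, writing $x_k=\sum_j\lambda_jV_jy_0^*(x_k)\in P(y_0^*(x_k))$ and comparing with the point $\sum_j\lambda_jV_jy_m\in P(y_m)$ bounds $\operatorname{dist}(x_k,P(y_m))\le(\max_j\norm{V_j})\,\norm{y_0^*(x_k)-y_m}\to0$, which is part~3. The main obstacle is the part~2 bookkeeping: one must assemble the shifted candidate so that the $\gamma$-scaled terminal triple remains feasible (via convexity of $\mathbb S$ and $(y_m,u_m,y_m)\in\mathbb S$) while its cost still decreases (via $Q+\gamma^2R\preceq R$), and then upgrade the decay of $z_0$ to convergence of the whole tube, which needs the strong-convexity-plus-shift estimate rather than a one-line bound.
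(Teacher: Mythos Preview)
Your argument is correct. The paper itself does not give a self-contained proof of this proposition: it simply invokes Corollary~1 of the companion tracking paper and specializes that result to the situation in which the optimal RCI parameters $(y_m,u_m)$ are held fixed. Your route, by contrast, rebuilds the full recursive-feasibility and Lyapunov-decrease machinery directly in the configuration-constrained setting: the $\gamma$-convex-combination trick showing that the shifted terminal triple is the $(\gamma,1-\gamma)$ combination of the old terminal triple with $(y_m,u_m,y_m)\in\mathbb S$, the telescoping identity $\mathcal L(\tilde{\mathbf y},\tilde{\mathbf u})=\mathcal L(\mathbf y^*,\mathbf u^*)-\|z_0\|_Q^2+z_N^\top(Q+\gamma^2R-R)z_N$, and the strong-convexity estimate $\|(\mathbf y^*(x_{k+1}),\mathbf u^*(x_{k+1}))-(\tilde{\mathbf y},\tilde{\mathbf u})\|^2\le\tfrac{2}{\mu}(V(x_k)-V(x_{k+1}))$ that upgrades decay of $z_0$ to convergence of the whole tube. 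What the paper's approach buys is brevity and modularity, since the heavy lifting is delegated to a prior tracking result; what your approach buys is transparency and self-containment, making explicit exactly which structural ingredients (convexity of $\mathbb S$, $(y_m,u_m,y_m)\in\mathbb S$, and $Q+\gamma^2R\preceq R$) drive each of the three conclusions.
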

\begin{proof}
The proof follows by applying~\cite[Corollary 1]{badalamenti2024CCTMPCTrack} to the case of fixed optimal RCI set parameters.
\end{proof}

While the TMPC scheme based on~\eqref{eq:CCTMPC_scheme} exhibits favorable theoretical properties as described in Proposition~\ref{prop:CCTMPC}, solving~\eqref{eq:CCTMPC_scheme} online can be computationally demanding. For example, if \(\mathcal{X}\) and \(\mathcal{U}\) are polyhedra represented by \(m_{\mathcal{X}}\) and \(m_{\mathcal{U}}\) half-spaces, respectively, and if \(m_{\mathcal{E}}\) denotes the number of rows of \(E\), then Problem~\eqref{eq:CCTMPC_scheme} has \((N+1) \times (\fe + \ve \cdot n_u)\) optimization variables and \((N+1) \times (\fe\me\ve + m_{\mathcal{E}} + \ve(m_{\mathcal{X}} + m_{\mathcal{U}}))\) constraints. Here, we additionally recall that \(\fe\) denotes the number of facets, \(\ve\) the number of vertices, and \(n_u\) the number of control inputs.

\subsection{Homothetic CCTMPC for Tracking}
One path to reducing the number of optimization variables in the above Tube MPC formulation is to work with a homothetic transformation. To outline the main idea, the following standard assumption is introduced (see also~\cite{HTMPC_RAKOVIC20121631}).
\begin{assumption}
\label{ass:C_mRPI}
The optimal RCI set satisfies the condition $\scalemath{0.95}{(0,0) \in P(y_m) \times \mathrm{convh}(\{u_{m,j} \in \mathcal{U} \mid j \in \{1,\myldots,\ve\}\})}$.
\end{assumption}
Next, similar to~\cite{HTMPC_RAKOVIC20121631}, we introduce a scaling factor $\alpha \in \R$ and offset vectors $(z,v) \in \R^{n_x+n_u}$, such that
\begin{align}
    \label{eq:HTMPC_parameterization}
    y=\alpha y_m+Fz, \quad u_j=\alpha u_{m,j}+v, \ j \in \{1,\myldots,\ve\}.
\end{align}
We recall from~\cite{CCTMPC_VILLANUEVA2024111543} that the construction of the matrix \(E\) and the associated configuration cone \(\mathcal{E}\) imply
\begin{align}
\label{eq:HTMPC_CC}
y_m \in \mathcal{E} \ \Rightarrow \ \alpha y_m+Fz \in \mathcal{E}, \quad \forall \alpha\geq 0,\ z \in \R^{n_x}.
\end{align}
Thus, \(P(\alpha y_m+Fz)\) is a configuration-constrained polytope for all \(\alpha \geq 0\). To enforce the state and input constraints, we assume that the constraint sets are modeled by polytopes, namely, $\mathcal{X}:=\{x \in \R^{n_x} \mid H^x x \leq h^x \}$ and $\mathcal{U}:=\{u \in \R^{n_u} \mid H^u u \leq h^u\}$. Next, we define 
\begin{align*}
    h^x_m&:=\max\{H^xV_j y_m \mid j \in \{1,\myldots,\ve\}\}, \\
    h^u_m&:=\max\{H^u u_{m,j} \mid j \in \{1,\myldots,\ve\}\},
\end{align*}
with the $\max$ taken row-wise. It follows that the inclusions $V_j(\alpha y_m+Fz) \in \mathcal{X}$ and $\alpha u_{m,j}+v \in \mathcal{U}$ hold for all vertices $j \in \{1,\cdots,\ve\}$ if and only if
\begin{align}
    \label{eq:HTMPC_constraints}
    \begin{bmatrix} H^x & 0 \\ 0 & H^u \end{bmatrix} \begin{bmatrix} z \\ v \end{bmatrix} + \begin{bmatrix} h^x_m \\ h^u_m \end{bmatrix} \alpha \leq \begin{bmatrix} h^x \\ h^u \end{bmatrix}. 
\end{align}
Finally, we define the set $\bar{\mathbb{S}} \subseteq \R^{2n_x+n_u+2}$ by
\begin{align*}
\mathbb{\bar{S}}\coloneqq
\left\{ \begin{pmatrix}
    (z,\alpha) \\ v \\ (z^+\!,\alpha^+)
\end{pmatrix}  \middle| \, 
\begin{aligned}
&\alpha \geq 0,\quad \eqref{eq:HTMPC_constraints}, \quad  \forall \,i \in \{1,\myldots,\me\}, \vspace{1pt}\\
& \begin{multlined}
    F(A_i z + B_i v) + (1\!-\!\alpha) d + \alpha y_m \\
\leq \alpha^+y_m + Fz^+,
\end{multlined}
\end{aligned} 
\right\}\!.
\end{align*}

\begin{corollary}
\label{corr:RFIT_HTMPC}
Let \((z_k,\alpha_k)_{k \in \mathbb{N}}\) be such that for all \(k \in \mathbb{N}\) there exist inputs \(v_k\) satisfying $((z_k,\alpha_k),v_k,(z_{k+1},\alpha_{k+1})) \in \bar{\mathbb{S}}$. Then, the sequence $(P(\alpha_k y_m+Fz_k))_{k \in \mathbb{N}}$ is an RCT.
\end{corollary}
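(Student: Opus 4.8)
The plan is to derive the statement from Proposition~\ref{prop:RFIT_CC} by exhibiting, for each $k$, a stacked vertex input $u_k = (u_{k,1},\myldots,u_{k,\ve})$ such that $(y_k,u_k,y_{k+1}) \in \mathbb{S}$, where $y_k := \alpha_k y_m + F z_k$. Guided by the homothetic ansatz~\eqref{eq:HTMPC_parameterization}, the natural candidate is $u_{k,j} := \alpha_k u_{m,j} + v_k$. Three things then have to be checked, corresponding to the three groups of constraints in~\eqref{eq:S_definition}: the configuration-cone membership $E y_k \leq 0$; the state and input inclusions $V_j y_k \in \mathcal{X}$ and $u_{k,j} \in \mathcal{U}$; and the one-step dynamics inequality $F(A_i V_j y_k + B_i u_{k,j}) + d \leq y_{k+1}$ for all $(i,j)$.

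The first two are essentially bookkeeping on the data already in hand. Since $(y_m,u_m,y_m) \in \mathbb{S}$ as a feasible point of~\eqref{eq:optimal RCI}, in particular $E y_m \leq 0$, i.e. $y_m \in \mathcal{E}$; then $E y_k \leq 0$ is exactly~\eqref{eq:HTMPC_CC} evaluated at $\alpha_k \geq 0$, the nonnegativity of $\alpha_k$ being part of the definition of $\bar{\mathbb{S}}$. The inclusions $V_j y_k \in \mathcal{X}$ and $u_{k,j} \in \mathcal{U}$ for all $j$ are, by the equivalence established just before~\eqref{eq:HTMPC_constraints}, the same as requiring~\eqref{eq:HTMPC_constraints} to hold at $(z_k,v_k,\alpha_k)$, which is again part of the definition of $\bar{\mathbb{S}}$.

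The dynamics inequality carries the real content, and it hinges on one observation: the vertex maps restrict $F$ to the identity, i.e. $V_j F = \I_{n_x}$ for every $j$. Indeed, the standing assumption that $Fx \leq 0$ implies $x = 0$ shows that $P(Fz) = \{x : F(x-z) \leq 0\} = \{z\}$; and since $Fz \in \mathcal{E}$ by~\eqref{eq:HTMPC_CC} with $\alpha = 0$, the vertex representation~\eqref{eq:vertex_config} gives $\mathrm{convh}(\{V_j Fz \mid j\}) = \{z\}$ for all $z$, hence $V_j F = \I_{n_x}$. Consequently $V_j y_k = \alpha_k V_j y_m + z_k$, so $A_i V_j y_k + B_i u_{k,j} = \alpha_k(A_i V_j y_m + B_i u_{m,j}) + (A_i z_k + B_i v_k)$. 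Splitting the disturbance term as $d = \alpha_k d + (1-\alpha_k)d$, applying the RCI inequality $F(A_i V_j y_m + B_i u_{m,j}) + d \leq y_m$ (from $(y_m,u_m,y_m)\in\mathbb{S}$) scaled by $\alpha_k \geq 0$ to the first part, and the $\bar{\mathbb{S}}$ inequality $F(A_i z_k + B_i v_k) + (1-\alpha_k)d + \alpha_k y_m \leq \alpha_{k+1} y_m + F z_{k+1}$ to the second part, one obtains $F(A_i V_j y_k + B_i u_{k,j}) + d \leq \alpha_{k+1} y_m + F z_{k+1} = y_{k+1}$ for all $(i,j)$. Hence $(y_k,u_k,y_{k+1}) \in \mathbb{S}$ for all $k$, and Proposition~\ref{prop:RFIT_CC} yields that $(P(y_k))_{k\in\mathbb{N}}$ is a feasible RCT.

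The only non-routine step is the identity $V_j F = \I_{n_x}$, which I would record as a one-line auxiliary fact (it also explains why the offset enters the parameterization through $Fz$ rather than through $z$ directly); everything else is sign-respecting linear algebra, where the only care needed is to keep the coefficients $\alpha_k$ and $1-\alpha_k$ attached to the correct terms and to invoke $\alpha_k \geq 0$ when scaling the RCI inequality.
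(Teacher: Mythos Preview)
Your argument is correct and follows the same route as the paper's proof: verify the configuration, state, and input constraints via~\eqref{eq:HTMPC_CC} and~\eqref{eq:HTMPC_constraints}, substitute the homothetic parameterization~\eqref{eq:HTMPC_parameterization} into~$\mathbb{S}$, and combine the RCI inequality for $(y_m,u_m)$ with the $\bar{\mathbb{S}}$ dynamics constraint before invoking Proposition~\ref{prop:RFIT_CC}. Your explicit derivation of $V_jF=\I_{n_x}$ is a welcome addition, as the paper's compressed proof relies on it tacitly when ``substituting the parameterization.''
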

\begin{proof}
The state, input, and configuration constraints for the RCT follow from~\eqref{eq:HTMPC_constraints} and~\eqref{eq:HTMPC_CC}. Substituting the parameterization in~\eqref{eq:HTMPC_parameterization} into the set \(\mathbb{S}\) and applying Proposition~\ref{prop:RFIT_CC}, along with the observation that \((y_m,u_m)\) satisfy $F(A_i V_j y_m + B_i u_{m,j})+d \leq y_m$ for all $(i,j) \in \{1,\myldots,\me\} \times \{1,\myldots,\ve\}$, completes the proof.
\end{proof}

By exploiting Corollary~\ref{corr:RFIT_HTMPC}, we formulate our Homothetic TMPC scheme based on the parametric QP
\begin{equation}
\label{eq:HTMPC_scheme}
\begin{aligned}
\min_{\mathbf{z},\boldsymbol{\alpha},\mathbf{v}} &\ \sum_{k=0}^{N-1}\norm{ \begin{bmatrix} z_k \\ v_k \\ \alpha_k-1 \end{bmatrix}}_{\bar{Q}}^2+ \norm{\begin{bmatrix} z_N \\ v_N \\ \alpha_N-1 \end{bmatrix}}_{\bar{R}}^2 \vspace{3pt} \\
\text{s.t.} \hspace{4pt} & \left\{\!
\begin{aligned}
&\forall k \in \{ 0, \ldots,N - 1\},\\
&((z_k,\alpha_k),v_k,(z_{k+1},\alpha_{k+1})) \in \bar{\mathbb{S}}\\
&Fx \leq \alpha_0 y_m + Fz_0,\\
&((z_N,\alpha_N),v_N,(\gamma z_N,\gamma\alpha_N+1-\gamma)) \in \bar{\mathbb{S}},
\end{aligned}
\right.
\end{aligned}
\end{equation}
where $\mathbf{z}=(z_{0},\myldots,z_N)$, $\boldsymbol{\alpha}=(\alpha_{0},\myldots,\alpha_N)$, and $\mathbf{v}=(v_0,\myldots,v_N)$ stack the sequence of homothetic tube parameters, and $\bar{Q}$ and $\bar{R}$ are symmetric positive definite matrices in $\R^{(n_x+n_u+1)\times(n_x+n_u+1)}$. We denote the optimizers of~\eqref{eq:HTMPC_scheme} as $(\mathbf{z}^*(x),\boldsymbol{\alpha}^*(x),\mathbf{v}^*(x))$, and define the control law as
\begin{align}
\label{eq:HTMPC_control_law}
\mu_{\mathrm{h}}(x):=\arg\min_{u \in \mathcal{U}}& \ u^{\top}u \\ 
\text{s.t.} \hspace{2pt}&  
\left\{\,
\begin{aligned}
&\forall i \in \{1,\myldots,\me\},\\
&F(A_ix+B_iu)+d \leq \alpha^*_1 y_m+Fz^*_1(x).
\end{aligned}
\right. \nonumber
\end{align}

Finally, we write the closed-loop system as
\begin{align}
\label{eq:CL_HTMPC}
x_{k+1} = A_k x_k + B_k \mu_{\mathrm{h}}(x_k) +w_k.
\end{align}
\begin{corollary}
Suppose that \(\gamma \in (0,1)\) and the cost matrices \(\bar{Q}\succ0\) and \(\bar{R} \succ 0\) satisfy \(\bar{Q}+\gamma^2 \bar{R} \preceq \bar{R}\). Then, the system in~\eqref{eq:CL_HTMPC} enjoys properties equivalent to those in Proposition~\ref{prop:CCTMPC}.
\end{corollary}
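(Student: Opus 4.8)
The plan is to show that the homothetic scheme in~\eqref{eq:HTMPC_scheme} is a special case of the fully-parameterized scheme in~\eqref{eq:CCTMPC_scheme}, obtained by restricting the tube parameters to the affine subspace defined by~\eqref{eq:HTMPC_parameterization}, and then to invoke Proposition~\ref{prop:CCTMPC}. Concretely, I would define the affine map $\Phi:(z,\alpha,v)\mapsto(y,u)$ with $y=\alpha y_m+Fz$ and $u_j=\alpha u_{m,j}+v$ for $j\in\{1,\dots,\ve\}$, extended coordinatewise along the horizon. The first step is to verify that $\Phi$ maps feasible points of~\eqref{eq:HTMPC_scheme} to feasible points of~\eqref{eq:CCTMPC_scheme}: the membership $((z_k,\alpha_k),v_k,(z_{k+1},\alpha_{k+1}))\in\bar{\mathbb{S}}$ translates exactly into $(y_k,u_k,y_{k+1})\in\mathbb{S}$ by the computation already carried out in the proof of Corollary~\ref{corr:RFIT_HTMPC} (using that $(y_m,u_m)$ is an RCI parameter so that $F(A_iV_jy_m+B_iu_{m,j})+d\le y_m$), while~\eqref{eq:HTMPC_constraints} is equivalent to $V_j y_k\in\mathcal X$, $u_{k,j}\in\mathcal U$, and the terminal constraint $(\gamma z_N,\gamma\alpha_N+1-\gamma)$ maps to $\gamma y_N+(1-\gamma)y_m$ because $\gamma(\alpha_N y_m+Fz_N)+(1-\gamma)y_m=(\gamma\alpha_N+1-\gamma)y_m+F(\gamma z_N)$. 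Likewise $Fx\le\alpha_0 y_m+Fz_0$ is $Fx\le y_0$.

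The second step is to match the cost. Since $y_k-y_m=(\alpha_k-1)y_m+Fz_k$ and $u_{k,j}-u_{m,j}=(\alpha_k-1)u_{m,j}+v_k$, the stacked deviation $(y_k-y_m,u_k-u_m)$ is a fixed linear image $T(z_k,v_k,\alpha_k-1)$ of the homothetic coordinates. Hence the CCTMPC cost evaluated along the image equals $\sum_k\|T(z_k,v_k,\alpha_k-1)\|_Q^2+\|T(z_N,v_N,\alpha_N-1)\|_R^2$, which is exactly the homothetic cost~\eqref{eq:HTMPC_scheme} provided one chooses $\bar Q=T^\top Q T$ and $\bar R=T^\top R T$ (or, conversely, any $\bar Q,\bar R$ of this form); in the paper's setting one simply takes $Q,R$ so that these identities hold, and the condition $\bar Q+\gamma^2\bar R\preceq\bar R$ is inherited from $Q+\gamma^2 R\preceq R$ by congruence with $T$. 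With feasibility and cost identified, the closed-loop dynamics also coincide: the control law~\eqref{eq:HTMPC_control_law} is~\eqref{eq:CCTMPC_control_law} with $y_1^*$ replaced by $\alpha_1^* y_m+Fz_1^*$, i.e., by the first-step tube parameter produced by the restricted optimizer, so both solve the same projection problem onto the same half-space system.

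From here the three properties transfer directly. Recursive feasibility of~\eqref{eq:HTMPC_scheme} follows because the feasible shifted-and-scaled warm start used in the proof of Proposition~\ref{prop:CCTMPC} can itself be expressed in homothetic coordinates (the shift of an affinely parameterized tube stays affinely parameterized, using~\eqref{eq:HTMPC_CC} for the configuration-cone membership and the terminal ingredient built from $\gamma$). The Lyapunov decrease and the limit $(\mathbf z^*,\boldsymbol\alpha^*,\mathbf v^*)\to(0,\mathbf 1,0)$, hence $\lim x_k\in P(y_m)$, follow by applying the inequality of Proposition~\ref{prop:CCTMPC} to the image trajectory and pulling it back through $T$, noting that $(\mathbf y^*,\mathbf u^*)=((y_m,\dots,y_m),(u_m,\dots,u_m))$ corresponds to $(z,v,\alpha-1)=0$. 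The main obstacle I anticipate is the recursive-feasibility step: one must confirm that the warm-start construction from~\cite{badalamenti2024CCTMPCTrack}, which a priori lives in the full parameter space, actually preserves the homothetic structure, i.e.\ that propagating $(z_k,\alpha_k)$ one step through $\bar{\mathbb S}$ and applying the $\gamma$-contraction at the tail reproduces the same shifted tube; this is where~\eqref{eq:HTMPC_CC} and the definition of $h^x_m,h^u_m$ are essential, and it is worth checking carefully that no feasibility is lost in the restriction. Everything else is routine linear algebra, so the cleanest write-up is: (i) state the restriction lemma $\Phi(\text{feas})\subseteq\text{feas}$ and cost identity, (ii) observe the warm start is closed under $\Phi$, (iii) quote Proposition~\ref{prop:CCTMPC}.
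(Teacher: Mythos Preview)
Your overall strategy---viewing the homothetic scheme as the restriction of the fully-parameterized CCTMPC to the affine subspace $\operatorname{ran}\Phi$ and then transferring Proposition~\ref{prop:CCTMPC}---is exactly the paper's, and your feasibility and terminal-constraint checks are correct. There is, however, a genuine logical slip in the cost step. You write that ``$\bar Q+\gamma^2\bar R\preceq\bar R$ is inherited from $Q+\gamma^2 R\preceq R$ by congruence with $T$,'' but the corollary \emph{assumes} the inequality on $\bar Q,\bar R$; to invoke Proposition~\ref{prop:CCTMPC} you need the implication in the opposite direction, i.e.\ you must \emph{produce} matrices $Q,R$ in the $(y,u)$-space satisfying $Q+\gamma^2 R\preceq R$. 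Merely choosing any $Q,R$ with $T^\top QT=\bar Q$, $T^\top RT=\bar R$ does not give this, because congruence by $T$ only pushes inequalities forward along $T^\top(\cdot)T$, not backward.

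The paper closes this gap by the explicit choice $Q=(\tilde F^{\dagger})^\top\bar Q\,\tilde F^{\dagger}$ and $R=(\tilde F^{\dagger})^\top\bar R\,\tilde F^{\dagger}$, where $\tilde F$ is your map $T$; since $\tilde F$ is full column-rank, $\tilde F^{\dagger}\tilde F=\I$, so $T^\top QT=\bar Q$ and congruence by $\tilde F^{\dagger}$ yields $Q+\gamma^2R\preceq R$ from the hypothesis on $\bar Q,\bar R$. Note that these $Q,R$ are only positive semidefinite on $\R^{\fe+\ve n_u}$; the paper observes they are positive definite on the range of the equality constraints (i.e.\ on $\operatorname{ran}\tilde F$), which is all Proposition~\ref{prop:CCTMPC} actually needs once the problem is restricted there. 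An equivalent and perhaps cleaner route, which you half-suggest, is to bypass the lift entirely: verify that the shifted/$\gamma$-contracted warm start stays in $\bar{\mathbb S}$ (this is just convexity of $\bar{\mathbb S}$, since the tail update is a convex combination with $(0,1,0)$), and run the Lyapunov descent argument directly in the $(z,v,\alpha)$-coordinates, where the needed condition is precisely the assumed $\bar Q+\gamma^2\bar R\preceq\bar R$. Either fix works; as written, your congruence direction is backwards and the argument is incomplete.
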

\begin{proof}
Define the matrix \(\tilde{\I} = \mathbf{1}_{\ve} \otimes \I_{n_u}\) and 
\begin{align}
\label{eq:trackCostEquiv}
\tilde{F}:=\begin{bmatrix} F & 0 & y_m \\ 0 & \tilde{\I} & u_m \end{bmatrix} \in \R^{(\fe+\ve \cdot n_u) \times (n_x+n_u+1)}.
\end{align}
We observe that Problem~\eqref{eq:HTMPC_scheme} is equivalent to Problem~\eqref{eq:CCTMPC_scheme} with the additional optimization variables \((\mathbf{z},\boldsymbol{\alpha},\mathbf{v})\), linear equality constraints \(y_k = \alpha_k y_m + Fz_k\) and \(u_k = \alpha_k u_m + \tilde{\I}\,v_k\), and with cost matrices given by  \(Q=\|\tilde{F}^\dagger\|^2_{\bar{Q}}\) and \(R=\|\tilde{F}^\dagger\|^2_{\bar{R}}\), where \(\tilde{F}^{\dagger}\) is well-defined since \(\tilde{F}\) is full column-rank. Recursive feasibility follows by exploiting the convexity of the set \(\bar{\mathbb{S}}\) and the linear equality constraints, and Lyapunov stability follows from the fact that \(Q\) and \(R\) are positive definite on the nullspace of the equality constraints, together with the inequality \(\bar{Q} + \gamma^2\, \bar{R} \preceq \bar{R}\) implying \(Q + \gamma^2 R \preceq R\).
\end{proof}

We remark that Problem~\eqref{eq:HTMPC_scheme} is computationally less demanding to solve online than Problem~\eqref{eq:CCTMPC_scheme}, since it has \((N+1) \times (n_x+n_u+1)\) optimization variables and \((N+1) \times (\fe\me + 1 + m_{\mathcal{X}} + m_{\mathcal{U}})\) constraints.

\subsection{A family of CCTMPC schemes}
The schemes presented in~\eqref{eq:CCTMPC_scheme} and~\eqref{eq:HTMPC_scheme} represent the two extremes of leveraging configuration-constrained polytopes for tube-based model predictive control (TMPC) synthesis. While~\eqref{eq:HTMPC_scheme} is based on a restriction of the parameters \((y,u)\) in terms of \((y_m,u_m)\) via~\eqref{eq:HTMPC_parameterization}, alternative formulations of CCTMPC can be derived by adopting more flexible parameterization strategies. These intermediate approaches aim to achieve a superior trade-off between computational efficiency and control conservatism for practical applications.

To demonstrate this flexibility, we propose two such intermediate schemes. For both formulations, admissible control sets equivalent to \(\mathbb{S}\) are systematically constructed by substituting the proposed \((y,u)\)-parameterizations into the synthesis framework, while explicitly leveraging the inherent inclusion \((y_m,u_m,y_m) \in \mathbb{S}\) to ensure recursive feasibility.

\subsubsection{Partially-parameterized CCTMPC}
In this scheme, we parameterize some vertices of \(P(y)\) as homothetic transformations of the corresponding vertices of \(P(y_m)\), while allowing the remaining vertices to be freely optimized. Let \(J_{\mathrm{p}} \subseteq \{1,\myldots,\ve\}\) be the set of indices corresponding to the vertices of \(P(y)\) that we want to parameterize as
\begin{align}
\label{eq:vertices_scaled}
    V_j y = \alpha V_j y_m+z, \quad \forall j \in J_{\mathrm{p}}.
\end{align}
For each \(j\), denote the indices of the halfspaces intersecting at the vertex \(V_j y\) by \(i(j) \subset \{1,\myldots,\fe\}\). To achieve~\eqref{eq:vertices_scaled}, we define the index set $I_{\mathrm{p}}:= \cup_{j \in J_{\mathrm{p}}} i(j)$. Additionally, we use the symbol \(e_i\) to denote the \(i\)-th column of the identity matrix \(\mathbb{I}_{\fe}\), thus introducing the matrix
\begin{align*}
    S_{\mathrm{p}} = \sum_{i \in I_{\mathrm{p}}} e_i e_i^{\top}.
\end{align*}

Furthermore, we define the set $I_{\mathrm{f}}:=\{1,\myldots,\fe\}\setminus I_{\mathrm{p}}$. Then, the polytope $P(\alpha S_{\mathrm{p}} y_m+Fz+S_{\mathrm{f}} \hat{y})$ satisfies~\eqref{eq:vertices_scaled} if $\alpha S_{\mathrm{p}} y_m+Fz+S_{\mathrm{f}} \hat{y} \in \mathcal{E}$. For consistency with~\eqref{eq:vertices_scaled}, we parameterize the vertex inputs for $j \in J_{\mathrm{p}}$ as $ u_j = u_{m,j}+v$, while the inputs corresponding to the remaining vertices are freely optimized. 

Denoting the cardinality of \(J_{\mathrm{p}}\) and \(I_{\mathrm{f}}\) as \(\hat{\ve}\) and \(\hat{\fe}\) respectively, the Partially-parameterized CCTMPC scheme is then composed of $(N+1) \times (\hat{\fe}+n_x+n_u+1+(\ve-\hat{\ve})n_u)$ optimization variables, with \(\hat{\ve}>0\). Note that if \(\hat{\ve}=\ve\), then by construction \(\hat{\fe}=0\) and the Partially-parameterized CCTMPC scheme reduces to the Homothetic CCTMPC scheme. Conversely, for \(\hat{\ve}=0\), the scheme retrieves the Fully-parameterized CCTMPC formulation. The number of linear inequality constraints is the same as that of the Fully-parameterized CCTMPC scheme for any \(\hat{\ve} \in [0,\ve)\), while it reduces to that of Homothetic CCTMPC if \(\hat{\ve} = \ve\). In practice, the index set \(J_{\mathrm{p}}\) should be chosen so that the controller’s stabilizable region includes the desired system’s initial states. Developing a systematic method for selecting \(J_{\mathrm{p}}\) based on a given set of initial conditions is a topic for future research.

\subsubsection{Homothetic CCTMPC with optimized controls}
Inspired by the approach of~\cite{Langson2004}, we parameterize \(y=\alpha y_m+Fz\), while allowing the vertex control inputs to be freely optimized. This results in a scheme defined by \((N+1) \times (n_x+1+\ve \cdot n_u)\) optimization variables and \((N+1) \times (\fe\me\ve+1+m_{\mathcal{X}}+\ve m_{\mathcal{U}})\) linear inequality constraints, with a stabilizable region that lies between the Homothetic CCTMPC and the Fully-parameterized CCTMPC schemes.

\section{Computing Template Polytopes} \label{sec:TemplateCompute}
This section presents a computational procedure for the construction of configuration triples that can be used in TMPC synthesis. In particular, the focus is on iteratively refining a configuration triple \((F,E,V)\) such that, at each iteration, \(\mathbb{S}\) (or \(\bar{\mathbb{S}}\)) remains nonempty, the number of vertices of the new polytope increases by a known, fixed amount, and the size of the resulting RCI set increases. 

\subsection{Iterative Refinement of the Configuration Triple}

Suppose a configuration triple $(F,E,V)$ is given such that $\mathbb S$ is nonempty. Let $\mathcal X = \operatorname{convh}(\{\xi_1,\myldots,\xi_{\rm s}\})$, and consider the optimization 
problem
\begin{align}
\label{eq:size_function}
    \sigma \coloneqq \min_{y_{M}\geq0,u,z} &\ \sum_{i=1}^{v_{\mathcal{X}}}\|\xi_i-z_i\|_2^2 \\
    \text{s.t.} \hspace{8pt}&\ (y_{M},u,y_{M}) \in \mathbb{S},\  F z_i \leq y_{M},\ i \in \{1,\myldots,{\rm s}\}\,.\nonumber
\end{align}

Notice that, although we do not report the dependencies, the above problem is parametric in the configuration triple. Let \((y_{M}^\star,u^\star,z^\star)\) be a minimizer of~\eqref{eq:size_function}. Then, the RCI polytope $\{ x \in \mathcal X \ | \  F x \leq y_{M}^\star \}$ contains points $z^\star_{1},\ldots,z^\star_{\rm s}$ that have minimal aggregated distance to the vertices of $\mathcal X$. 
\begin{algorithm}
\caption{
Configuration Triple Refinement
}\label{alg:refinement}
\KwData{Initial triple $(F,E,V)$, $i_{\mathrm{max}} \geq 1$}
$i \gets 1$\;
Solve~\eqref{eq:size_function} with $(F,E,V)$, set  $\sigma^i\gets \sigma$ and $y^1 \leftarrow y_M^\star$\;
    \While{$i \leq i_{\mathrm{max}}$}{
         Set $c_j \gets V_j\, y^i$ for $j \in \{1,\myldots,\ve\}$\;
        \textbf{par}\For{$j \gets 1$ \KwTo $\ve$}{
            $\zeta \gets \max_{z} \, c_j^{\top} z \ \text{s.t.} \ Fz\leq y^i$\;
            Set $\kappa \in (0,1)$ such that 
            $c_j^{\top} c_k < \kappa \zeta,\quad \forall k \in \{1,\myldots,\ve\}\setminus\{j\}$\;
             Compute the configuration triple $(F^j\!,E^j\!,V^j)$ for the polytope $\{ x \, | \, Fx \leq y^i, \, c_j x \leq \kappa\zeta \}$\;
             Solve~\eqref{eq:size_function} with $(F^j,E^j,V^j)$, set $\overline{\sigma}^j  \gets \sigma$ and $\overline{y}^j\gets y^\star_{M}$\;
        }
        Compute $j^\star$ such that $\overline{\sigma}^{j^\star} \leq \overline{\sigma}^j$  $\forall j\in\{1,\myldots,\ve\}$\;
        Set $(F,E,V) \gets (F^{j^\star}\!,E^{j^\star}\!,V^{j^\star})$\;
        Set $i \gets i+1$, $\sigma^{i}\gets \overline{\sigma}^{j^\star}$, and $y^i \gets \overline{y}^{j^\star}$\;
    }
    \Return $(F,E,V)$.
\end{algorithm}

Algorithm~\ref{alg:refinement} presents a procedure that iteratively adds rows to the matrix \(F\) while ensuring that \(\sigma\) is nonincreasing as the configuration triple is refined. Note that the main steps are contained in the inner (parallelizable) for loop in Steps 6 to 9. First, for each vertex \(c_j\), a supporting hyperplane \(\{ x \mid c_j x \leq \zeta \}\) is computed (Step 6). This hyperplane is then rescaled by \(\kappa\) so that \(c_j\) is excluded from the polytope \(\{ x \mid Fx \leq y^i,\; c_j x \leq \kappa\zeta \}\) (Step 7). At this point, a new template for this polytope is computed (Step 8), and the one that produces the largest RCI set is kept (Steps 9 to 11). Note that in a practical implementation of Algorithm~\ref{alg:refinement}, Step 4 must be modified to handle vertex redundancies: if there exists \(c_i = c_j\) for some \(i \neq j\), then one of them can be removed from the vertex list so that the enumeration in the inner for loop is performed over unique vertices.
\begin{proposition}
\label{prop:alg}
The sequence $(\sigma^1,\myldots,\sigma^{i_{\mathrm{max}}})$ generated by Algorithm~\ref{alg:refinement} is nonincreasing.
\end{proposition}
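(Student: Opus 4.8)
\emph{Proof proposal.} The plan is to reduce the statement to a single monotonicity inequality between consecutive iterations and then establish that inequality by transplanting the optimizer of~\eqref{eq:size_function} at iteration $i$ into the refined instance at iteration $i+1$. Since, by construction (Steps~10 and~12), $\sigma^{i+1}=\overline{\sigma}^{j^\star}=\min_{j\in\{1,\myldots,\ve\}}\overline{\sigma}^{j}$, it suffices to show that for every iteration $i$ there is at least one index $j$ with $\overline{\sigma}^{j}\le\sigma^{i}$ --- the argument below actually gives this for every $j$. The key structural observation is that the refined triple $(F^{j},E^{j},V^{j})$ from Step~8 parameterizes a strictly richer family of polytopes than the one used at iteration $i$: by the choice of $\kappa$ in Step~7 the hyperplane $\{x : c_j^\top x=\kappa\zeta\}$ separates only the vertex $c_j$ from all remaining vertices $c_k$, so cutting $P(y^i)$ with it removes a single corner without making any facet redundant; hence $F^{j}$ is, up to row scaling and reordering, the iteration-$i$ facet matrix with one extra row $c_j^\top$ appended, and the RCI polytope $P(y^i)$ is recovered from the refined template simply by letting that extra facet be redundant.

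Concretely, let $(y^i,u^i)$ with $(y^i,u^i,y^i)\in\mathbb S$ (for the iteration-$i$ triple) and let $z^\star$ denote an optimizer of~\eqref{eq:size_function} at iteration $i$, so that $\sigma^i$ equals its objective value and $z^\star_l\in P(y^i)$ for every vertex $\xi_l$ of $\mathcal X$. Pick $M>\zeta$, so that $c_j^\top x\le M$ is redundant over $P(y^i)=\{x : Fx\le y^i\}$, and set $\bar y:=(y^i,M)$ in the coordinates of the refined parameter; then $\bar y\ge0$ and $\{x : F^{j}x\le\bar y\}=P(y^i)$. I would then check the three feasibility requirements of~\eqref{eq:size_function} for the triple $(F^{j},E^{j},V^{j})$: that $\bar y\in\mathcal E^{j}:=\{y : E^{j}y\le0\}$; that $(\bar y,\bar u,\bar y)$ lies in the set $\mathbb S$ built from $(F^{j},E^{j},V^{j})$ for a suitable $\bar u$; and that $F^{j}z^\star_l\le\bar y$ for all $l$. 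The last of these is immediate, since its first block reads $Fz^\star_l\le y^i$ and its extra row reads $c_j^\top z^\star_l\le M$, which holds because $z^\star_l\in P(y^i)$ yields $c_j^\top z^\star_l\le\zeta<M$. For the membership $(\bar y,\bar u,\bar y)\in\mathbb S$, the vertex set of $\{x : F^{j}x\le\bar y\}$ coincides with that of $P(y^i)$ --- the auxiliary vertices produced by the cut collapse onto $c_j$ as the cut becomes redundant --- so one assigns to each vertex of the refined polytope the vertex input of its parent vertex in $P(y^i)$; with this choice the state/input membership constraints and the dynamics inequalities defining the refined $\mathbb S$ reduce, row by row, either to conditions already contained in $(y^i,u^i,y^i)\in\mathbb S$ or to the single extra row coming from the cut, which is slack for $M$ large. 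Finally, $\bar y\in\mathcal E^{j}$ holds because $\mathcal E^{j}$ is a closed polyhedral cone constructed, as in~\cite[Sec.~III.5]{CCTMPC_VILLANUEVA2024111543}, to contain precisely the parameters realizing the prescribed face configuration together with its degenerations, one of which is the configuration in which the newly added facet is redundant. Given these three points, $(\bar y,\bar u,z^\star)$ is feasible for~\eqref{eq:size_function} with the refined triple and attains objective value $\sigma^i$; in particular $\overline\sigma^{j}$ is finite and $\overline\sigma^{j}\le\sigma^i$.

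Minimizing over $j$ then gives $\sigma^{i+1}=\overline\sigma^{j^\star}=\min_j\overline\sigma^{j}\le\sigma^{i}$ for every $i$; since $\sigma^1$ is finite by hypothesis (the initial triple has nonempty $\mathbb S$), this chain shows that every $\sigma^i$ is well defined and that $(\sigma^1,\myldots,\sigma^{i_{\mathrm{max}}})$ is nonincreasing.

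The step I expect to be the main obstacle is verifying $\bar y\in\mathcal E^{j}$ together with the accompanying vertex bookkeeping: one must argue that appending the cut $c_j^\top x\le\kappa\zeta$ to $(F,E,V)$ yields a configuration triple whose cone $\mathcal E^{j}$ is closed, contains the parameter of $P(y^i)$ as a degeneration, and whose vertex maps make the new vertices of the refined polytope collapse continuously onto $c_j$ at that parameter. This is not a computation but a property of the configuration-triple construction used in Step~8; once it is established, the remainder is a routine substitution of the iteration-$i$ feasible point into the refined instance of~\eqref{eq:size_function}.
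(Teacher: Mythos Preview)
Your overall strategy coincides with the paper's: show $\overline{\sigma}^{j}\le\sigma^{i}$ for each vertex index $j$ by transplanting the iteration-$i$ optimizer into the refined instance of~\eqref{eq:size_function}, then invoke Steps~10 and~12. The paper carries this out in two sentences, appending the value $\zeta$ itself (not some $M>\zeta$) as the new component of $y_M$ and assigning the vertex input $u_j$ to each newly created vertex.

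There is, however, a genuine gap in your choice $M>\zeta$. The configuration cone $\mathcal{E}^{j}$ does \emph{not} contain the parameters for which the appended facet is strictly redundant. The new vertex maps $V_k^{j}$ are, by construction, intersections of the hyperplane $\{x:c_j^\top x=y_{\fe+1}\}$ with $n_x-1$ of the old facet hyperplanes incident to $c_j$. As $y_{\fe+1}$ increases from $\kappa\zeta$ to $\zeta$, these points slide along the edges of $P(y^i)$ toward $c_j$; at $y_{\fe+1}=\zeta$ they all coincide with $c_j$; for $y_{\fe+1}>\zeta$ they continue \emph{past} $c_j$ along the extended edges and leave $P(y^i)$, so $F\,V_k^{j}(y^i,M)\not\le y^i$ and hence $(y^i,M)\notin\mathcal{E}^{j}$. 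The same issue breaks the $\mathbb{S}$-membership you need: with $M>\zeta$ the points $V_k^{j}(y^i,M)$ need not lie in $\mathcal{X}$, nor need their one-step images lie in $P(y^i)$. Your remark that the new vertices ``collapse onto $c_j$ as the cut becomes redundant'' has the direction reversed---the collapse happens as the cut becomes \emph{tangent}, i.e., exactly at $\zeta$. Replacing $M$ by $\zeta$ throughout fixes all three verifications: the added halfspace is supporting, every new vertex equals $c_j\in P(y^i)\subseteq\mathcal{X}$, its input is $u_j$, and the one-step image is already in $P(y^i)$ by the iteration-$i$ RCI property, which in turn gives the extra row $c_j^\top(\cdot)+d_{\fe+1}\le\zeta$. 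This is precisely what the paper does.
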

\begin{proof}
    Let \(i\) be an arbitrary iteration index and \((y_M,u,z)\) a minimizing triple of~\eqref{eq:size_function} with optimal value \(\sigma^i\). Then, for an arbitrary vertex index, consider Step 9 of the algorithm. Here, we have that \(z\) is feasible together with \(\bigl((y_M^\top, \zeta)^\top\bigr)\) and \(u\) is chosen so that the vertex inputs corresponding to the new vertices coincide with the vertex input \(u_j\) (recall that the halfspace \(\{ x \mid c_j x \leq \zeta \}\) is redundant per Step 6). Thus, \(\overline{\sigma}^{j} \leq \sigma^i\). The proof concludes since, by Steps 10 and 12, \(\sigma^{i+1} \leq \overline{\sigma}^{j}\) for all \(j \in \{1,\myldots,\ve\}\).
\end{proof}
\begin{remark}
    A discussion on the construction of configuration triples can be found in~\cite[Section~III.5]{CCTMPC_VILLANUEVA2024111543} and~\cite[Remark 1]{Houska2024}. A more thorough discussion is available in~\cite[Sec II.13]{houska2024polyhedralcontroldesigntheory}, where it is explained that, in the case of triples constructed from simple polytopes, the number of rows of \(E\) is bounded by the number of edges of the polytope. Such observations can be used to improve the efficiency of Algorithm~\ref{alg:refinement}.
\end{remark}

\subsection{Computing Initial Matrix \(F\)}
We present a nonlinear program (NLP) inspired by~\cite{mulagaleti2024PDRCI,Gupta2020} to compute an initial feasible configuration triple for Algorithm~\ref{alg:refinement}. In particular, given a polytope
\[
\{ x\in\mathbb{R}^{n_x} \mid Fx \leq 1  \} = 
\operatorname{convh}(\{V_11,\myldots,V_{\ve}1\}),
\]
the procedure computes an invertible matrix \(T\in\mathbb{R}^{n_x \times n_x}\) such that $\{ x\in\mathbb{R}^{n_x} \mid FT^{-1}x \leq 1  \}$
is an RCI polytope with vertices $\{TV_11,\ldots,TV_{\ve}1\}$.

This NLP is formulated as
\begin{equation}
\label{eq:NLP_template}
    \begin{aligned}
        \min_{T,u,\varepsilon} &\quad \varepsilon^{\top}\varepsilon \\
    \text{s.t.} \hspace{3pt} &\left\{
    \begin{aligned}
    &\forall (i,j,k) \in \{1,\myldots,\me\}\! \times\! \{1,\myldots,\ve\}\! \times \! \{1,\myldots,\mathsf{w}\},\\    
    &FT^{-1}(A_i T z_j+B_iu_j + w_k) \leq 1,\\
    &Tz_j \in \mathcal{X}, \ u_j \in \mathcal{U},\\
    &\mathcal{X} \subseteq \{x \mid FT^{-1}x \leq 1+\varepsilon\}. 
    \end{aligned}
    \right.
    \end{aligned}
\end{equation}
with $\{w_1,\ldots,w_\mathsf{w}\}$ denoting the vertices of $\mathcal W$.

The inclusion $\mathcal{X} \subseteq \{x \mid FT^{-1}x \leq 1+\varepsilon\}$ in~\eqref{eq:NLP_template} can be formulated, using duality, as
\begin{align*}
\Lambda h^x \leq 1+\varepsilon,\ \Lambda H^x = FT^{-1},\ \Lambda\geq 0,
\end{align*}
with $\Lambda\in\R^{\fe\times \mathsf{w}}$ being an additional optimization variable.

Notice that the face configuration (and thus also the cone \(\mathcal{E}\)) of the original polytope is preserved since \(T\) defines an invertible linear transformation (see, e.g.,~\cite[Section 2.6]{Ziegler2008-hp} and~\cite[Section 1.1]{Grunbaum2003-xh}).

\section{Numerical Examples} \label{sec:NumExamples}
We validate our procedural construction using two numerical examples\footnote{MATLAB code available at~\url{https://github.com/fil-bad/EfficientCCTMPC}}. In both cases, we choose the set $\mathcal{Z}$ as an $n_x$-simplex with 
$F= [-\mathbb{I}_{n_x}\ \boldsymbol{1}]^\top$, ensuring that the origin is always in its interior for any positive offset vector. For the terminal set, \(\gamma\) is set to \(0.95\). We denote the stabilizable region of the Fully-parameterized CCTMPC scheme with horizon length \(N\) as \(\mathcal{O}_{\mathrm{c}}(N)\), and that of the Homothetic CCTMPC (HTMPC) scheme as \(\mathcal{O}_{\mathrm{h}}(N)\). Computational times for Problem~\eqref{eq:CCTMPC_scheme} and~\eqref{eq:HTMPC_scheme} were collected on an Intel Core i5-8350U CPU running Ubuntu 24.04.

\subsection{Illustrative Example} \label{subsec:3DIntegrator}
We consider a discretized triple integrator subject to both additive and multiplicative uncertainties. The nominal state and input matrices are defined as
\begin{align*}
    \bar{A} = \begin{bmatrix}
    1 & h & \sfrac{h^2}{2} \\
    0 & 1 & h \\
    0 & 0 & 1 
    \end{bmatrix}, 
    \quad \bar{B} = \begin{bmatrix}
        \sfrac{h^3}{6}\\
        \sfrac{h^2}{2}\\
        h
    \end{bmatrix}
\end{align*}
with $h=\sfrac{1}{4}$. The uncertain convex hull is given by $A =(1\pm0.1)\bar{A}$, $B =(1\pm0.1)\bar{B}$. The additive disturbance set is
\begin{align*}
    \mathcal{W} = \left\{ \begin{bmatrix}
    h & \sfrac{h^2}{2} & \sfrac{h^3}{6} \\
    1 & h & \sfrac{h^2}{2} \\
    0 & 1 & h 
    \end{bmatrix}w \ \middle|\ w\in \frac{1}{20}[-1,\, 1]^3 \right\},
\end{align*}
and the constraints are $\mathcal{X}=[-5,\,5]^3$ and $\mathcal{U}=[-3,\,3]$.
The initial template obtained from Problem~\eqref{eq:NLP_template} is
\begin{align*} FT^{-1} = 
\begin{bmatrix}
    1.1856&    2.1991&    0.2544\\
    0&    1.4770&    1.7581\\
   -2.6514&   -5.3810&   -2.6623\\
    1.4658&    1.7048&    0.6498\\
\end{bmatrix}.
\end{align*}
Algorithm~\ref{alg:refinement} is executed for \(i_{\max}=20\) iterations, where we search over the class of polytopes with $\fe= n_x+1+i$ facets and $\ve=n_x+1 + i(n_x-1)$ vertices, for \(i\in[1,i_{\max}]\). Figure~\ref{fig:algorithm1} illustrates the iterative refinement process of the configuration-constrained template.
\begin{figure}
    \centering
    \includegraphics[width=1\linewidth]{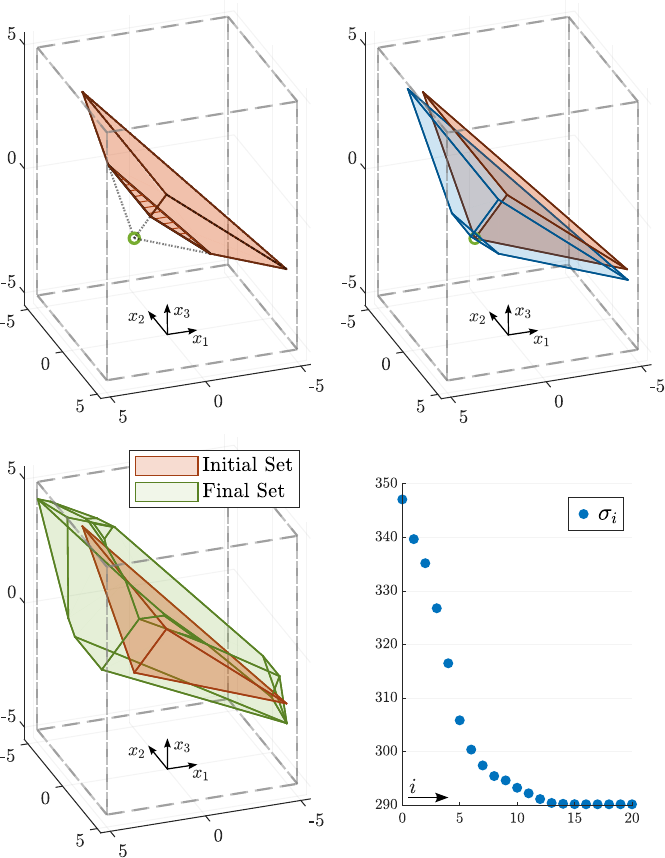}
    \caption{Steps of Algorithm~\ref{alg:refinement}: (top-left) First iteration—selected vertex $i^*$ (green) is marked for cutting, with the shadowed facet indicating the cut at $\kappa\zeta$. (top-right) CC-RCI set (blue) after one iteration. \texttt{Bottom-Left:} CC-RCI set before (red) and after $i_{\max}$ iterations (green). \texttt{Bottom-Right:} Evolution of $\sigma^i$ versus the iteration index~$i$.}
    \label{fig:algorithm1}
\end{figure}

To implement our control schemes, we select the template from iteration $i=10$. The optimal RCI cost in~\eqref{eq:optimal RCI} is defined as
\begin{equation*}
\ell(y,u) = \sum^{\ve}_{j=1} \norm{\begin{bmatrix}
    (\bar{V} -V_j)y \\ (\bar{U} -U_j)u 
\end{bmatrix}}^{2}_{Q_{v}}\! + 
\norm{\begin{bmatrix}
\bar{V} & 0 \\
0 & \bar{U}
\end{bmatrix}
\begin{bmatrix}
y \\ u
\end{bmatrix}}_{Q_{c}}^2,
\end{equation*}
with $\bar{V}=\sum_{j=1}^\ve V_j$, $\bar{U}=\sum_{j=1}^\ve U_j$, $Q_v = 10^{-1}\I_4$, and $Q_c = \I_4$. The tracking cost matrices are set to $Q = \I_{38}$ and $R=(1-\gamma^2)^{-1}Q$ for CCTMPC, with the proper adaptation as in~\eqref{eq:trackCostEquiv} for HTMPC. The prediction horizon is chosen as $N=3$.

Figure~\ref{fig:3D_FeasRegions} compares the sets \(\mathcal{O}_{\mathrm{c}}(3)\) and \(\mathcal{O}_{\mathrm{h}}(3)\), and illustrates several sampled trajectories converging to the optimal RCI set. Denoting \(\mathcal{X}^{-}\) as the 6-step backward reachable set from \(\mathcal{X}\), the Hausdorff distance metrics are $\mathrm{d}_{\mathcal{X}^{-}}(\mathcal{O}_{\mathrm{c}}(3))=3.3925$ and $\mathrm{d}_{\mathcal{X}^{-}}(\mathcal{O}_{\mathrm{h}}(3))=5.8220$ respectively, showing that CCTMPC is less conservative than HTMPC. While these metrics should ideally be computed against the Maximal RCI set, we report that the backward iteration procedure to compute the set runs out of memory after 6 iterations. Finally, by using the DAQP solver~\cite{arnstrom2022dual}, the (\textit{minimum}, \textit{average}, \textit{maximum}) computational times over sampled trajectories are \((2.59, 15.06, 76.96)\)~[ms] for CCTMPC and \((1.04, 1.30, 3.67)\)~[ms] for HTMPC, highlighting the trade-off between increased flexibility and computational cost—a discrepancy that grows with template complexity.
\begin{figure}
    \centering
    \includegraphics[width=1\linewidth]{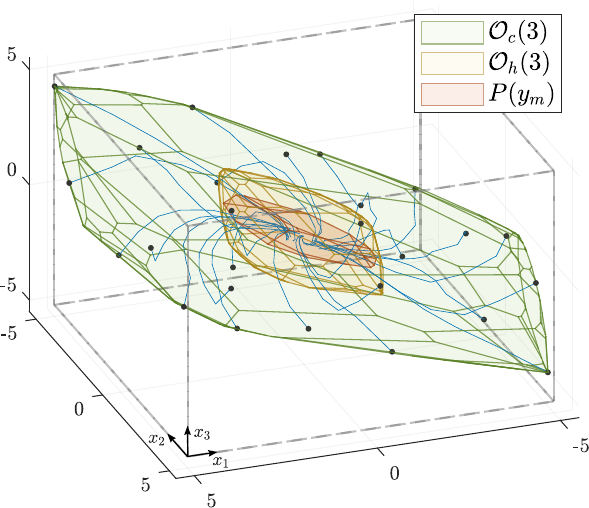}
    \caption{Comparison of feasible regions. State constraints (gray) and sampled trajectories (blue) are included.}
    \label{fig:3D_FeasRegions}
\end{figure}

\subsection{Quadrotor System} \label{subsec:Quadrotor}

We consider a $10$-state dynamical model of a quadrotor as in~\cite{TubeMPCQuadrotor}, discretized with a timestep of $0.1\,\mathrm{s}$ and linearized at $x_{eq}=0$ and $u_{eq} = [0, 0, k_T^{-1}g]^\top$, with $k_T = 0.91$ and $g$ denoting the gravitational constant. The state constraints are given by $\mathcal{X}=\{x\in \pm[4,4,2,10,10,5,\sfrac{\pi}{3},\sfrac{\pi}{3},\pi,\pi]\}$, while the inputs constraints are $\mathcal{U} = \{(u_1,u_2)\in \pm[\sfrac{\pi}{4}]^2, u_3\in[0,2g]-(k_T^{-1}g)\}$. The additive disturbance set is given by $\mathcal{W} = \{B_w w \mid w\in\pm[0.05,0.05,0.1]\}$, with $B_w = [\I_3\ \boldsymbol{0}_{3\times7}]^\top$.

For this high-dimensional example, we first illustrate the benefits of using the homothetic parameterization over the fully-parameterized CCTMPC scheme for synthesizing controllers. To this end, we compute a template matrix \(F\) by solving~\eqref{eq:NLP_template}, which we then enrich by introducing a cutting plane at each of the vertices. This results in a polytope with \(\fe = 22\) facets and \(\ve = 110\) vertices. Although Algorithm~\ref{alg:refinement} could be used to enrich the template systematically, we report that \(\sigma^i = \sigma^1\) for all \(i>1\), indicating that the algorithm is blocked at a local minimum. Strategies to escape local minima (e.g., picking an arbitrary cutting plane instead of following Step 10) are currently under study.

Optimal RCI cost is computed using $\ell(y,u)=\norm{y}_2^2+\norm{u}_2^2$, while the tracking cost is given by \(\bar{Q}=\I_{14}\), \(\bar{R}=(1-\gamma^2)^{-1}\bar{Q}\) for HTMPC and \(Q=\I_{352}\), \(R=(1-\gamma^2)^{-1}Q\) for CCTMPC. In Figure~\ref{fig:HausTimeComparison}, we compare the Hausdorff distance metric associated to the HTMPC scheme, \(\mathrm{d}_{\mathcal{X}}(\mathcal{O}_{\mathrm{h}}(N))\) for different $N$, against that of CCTMPC formulated with a prediction horizon \(N=3\). For any \(N\geq5\), we obtain $\mathrm{d}_{\mathcal{X}}(\mathcal{O}_{\mathrm{h}}(N))<\mathrm{d}_{\mathcal{X}}(\mathcal{O}_{\mathrm{c}}(3))$. In the same figure, the average computational time for CCTMPC$(3)$ (334~[ms]) is compared against HTMPC$(N)$ by using the Gurobi solver~\cite{gurobi}, as the dimensionality of this optimization problem makes Interior Point solvers attractive. The runtime advantage of the homothetic scheme is evident, as it is irrespective of the chosen template complexity.
\begin{figure}
    \centering
    \includegraphics[width=1\linewidth]{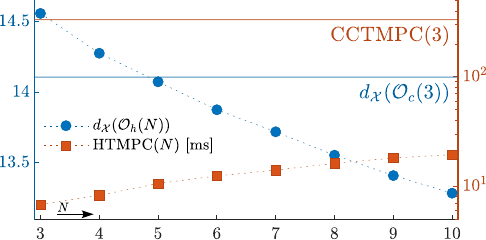}
    \caption{Performance comparison of HTMPC for different prediction horizons \(N\) against a CCTMPC\((3)\) baseline: Hausdorff distance to state constraints (blue) and average computational time (red).}
    \label{fig:HausTimeComparison}
\end{figure}

Motivated by these observations, we design a TMPC controller based on the HTMPC scheme. We reduce conservativeness by using a template matrix \(FT^{-1}\) obtained by solving Problem~\eqref{eq:NLP_template} with $F=[\I_{10} -\I_{10}]^{\top}$, resulting in $\ve=1024$. The closed-loop spatial trajectories obtained from Problem~\eqref{eq:HTMPC_scheme} formulated with $N=20$ are shown in Figure~\ref{fig:10StateQuadrotor}, where we also observe that the tube $P(\alpha_0y_m+Fz_0)$ converges to the optimal RCI set.
Using this template, we report that $\mathrm{d}_{\mathcal{X}}(\mathcal{O}_{\mathrm{h}}(20))=10.355$, and Problem~\eqref{eq:HTMPC_scheme} requires an average of 22.9~[ms] per iteration.
\begin{figure}
\centering
\includegraphics[width=\linewidth]{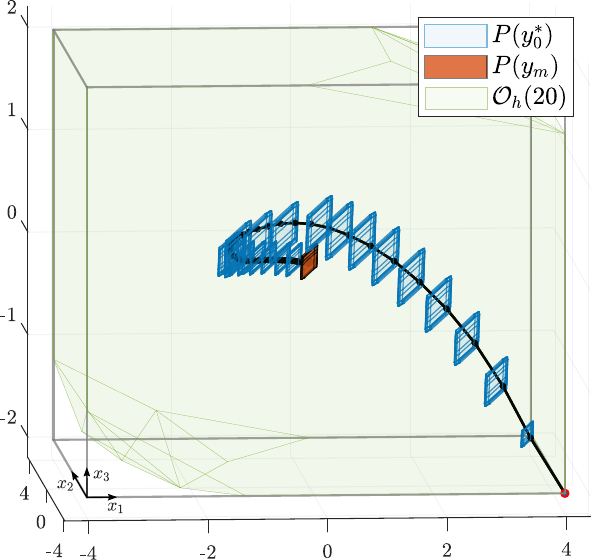}
\caption{Closed-loop evolution of HTMPC projected onto the spatial coordinate system. The red dot indicates the initial state.}
\label{fig:10StateQuadrotor}
\end{figure}

\section{Conclusions} \label{sec:Conclusions}
This paper advances the recently introduced CCTMPC framework for synthesizing TMPC schemes by $(i)$ presenting a family of RCT parameterizations and corresponding control schemes that strategically trade off between computational complexity and conservativeness, and $(ii)$ developing an algorithm to systematically synthesize template polytopes of user-specified complexity for CCTMPC synthesis. The benefits of these advances were demonstrated through numerical examples, paving the way for applying CCTMPC schemes to the design of robust controllers for relatively high-dimensional systems such as the \(10\)-state quadrotor benchmark illustrated in this article. Future work will focus on further refining Algorithm~\ref{alg:refinement} and developing an efficient software tool to synthesize template polytopes that exploit the properties of configuration constraints.

\balance
\bibliographystyle{plain}
\bibliography{references}

\begin{thebibliography}{10}

\bibitem{arnstrom2022dual}
D.~Arnström, A.~Bemporad, and D.~Axehill.
\newblock A dual active-set solver for embedded quadratic programming using recursive {LDL}$^{T}$ updates.
\newblock {\em IEEE Trans. Autom. Control}, 67(8):4362--4369, 2022.

\bibitem{badalamenti2024CCTMPCTrack}
F.~Badalamenti, S.~K. Mulagaleti, A.~Bemporad, B.~Houska, and M.~E. Villanueva.
\newblock {Configuration-Constrained Tube MPC for Tracking}.
\newblock {\em IEEE Control Syst. Lett.}, 2024.

\bibitem{Fleming2015}
J.~Fleming, B.~Kouvaritakis, and M.~Cannon.
\newblock {Robust Tube MPC for Linear Systems With Multiplicative Uncertainty}.
\newblock {\em IEEE Trans. Autom. Control}, 60(4):1087--1092, 2015.

\bibitem{Grunbaum2003-xh}
B.~Grunbaum and G.~M. Ziegler.
\newblock {\em Convex Polytopes}.
\newblock Graduate texts in mathematics. Springer, New York, NY, 2nd edition, 2003.

\bibitem{Gupta2020}
A.~Gupta, H.~Köroğlu, and P.~Falcone.
\newblock {Computation of robust control invariant sets with predefined complexity for uncertain systems}.
\newblock {\em Int. J. Robust Nonlinear Control}, 31(5):1674–1688, 2020.

\bibitem{gurobi}
{Gurobi Optimization, LLC}.
\newblock {Gurobi Optimizer Reference Manual}, 2024.

\bibitem{Houska2024}
B.~Houska, M.~A. Müller, and M.~E. Villanueva.
\newblock On stabilizing terminal costs and regions for configuration-constrained {T}ube {MPC}.
\newblock {\em IEEE Control Syst. Lett.}, 8:1961--1966, 2024.

\bibitem{houska2024polyhedralcontroldesigntheory}
B.~Houska, M.~A. Müller, and M.~E. Villanueva.
\newblock {Polyhedral Control Design: Theory and Methods}.
\newblock \url{https://arxiv.org/abs/2412.13082}, 2024.

\bibitem{Houska2019}
B.~Houska and M.~E. Villanueva.
\newblock Robust optimzation for {MPC}.
\newblock In S.~Raković and W.~Levine, editors, {\em Handbook of Model Predictive Control}, pages 413--443. Birk{\"a}user, Cham, 2019.

\bibitem{TubeMPCQuadrotor}
H.~Hu, X.~Feng, R.~Quirynen, M.~E. Villanueva, and B.~Houska.
\newblock {Real-Time Tube MPC Applied to a 10-State Quadrotor Model}.
\newblock In {\em Am. Control Conf. (ACC)}, pages 3135--3140, 2018.

\bibitem{Kouvaritakis2015}
B.~Kouvaritakis and M.~Cannon.
\newblock {\em Model Predictive Control: Classical, Robust and Stochastic}.
\newblock Springer, 2015.

\bibitem{Langson2004}
W.~Langson, I.~Chryssochoos, S.~V. Raković, and D.~Q. Mayne.
\newblock Robust model predictive control using tubes.
\newblock {\em Automatica}, 40(1):125--133, 2004.

\bibitem{MAYNE_RakovicRigidTMPC2005219}
D.~Q. Mayne, M.~M. Seron, and S.~V. Raković.
\newblock Robust model predictive control of constrained linear systems with bounded disturbances.
\newblock {\em Automatica}, 41(2):219--224, 2005.

\bibitem{mulagaleti2024PDRCI}
S.~K. Mulagaleti, M.~Mejari, and A.~Bemporad.
\newblock {Parameter-Dependent Robust Control Invariant Sets for LPV Systems with Bounded Parameter-Variation Rate}.
\newblock {\em IEEE Trans. Autom. Control}, 2024.

\bibitem{Implicit_rakovic2023}
S.~V. Raković.
\newblock The implicit rigid tube model predictive control.
\newblock {\em Automatica}, 157:111234, 2023.

\bibitem{HTMPC_RAKOVIC20121631}
S.~V. Raković, B.~Kouvaritakis, R.~Findeisen, and M.~Cannon.
\newblock {Homothetic tube model predictive control}.
\newblock {\em Automatica}, 48(8):1631--1638, 2012.

\bibitem{ETMPC_7525471}
S.~V. Raković, W.~S. Levine, and B.~Açikmese.
\newblock Elastic tube model predictive control.
\newblock In {\em Am. Control Conf. (ACC)}, pages 3594--3599, 2016.

\bibitem{Rawlings2015}
J.~B. Rawlings, D.~Q. Mayne, and M.~M. Diehl.
\newblock {\em Model Predictive Control: Theory and Design}.
\newblock Madison, WI: Nob Hill Publishing, 2015.

\bibitem{Villanueva2020}
M.~E. Villanueva, E.~De~Lazzari, M.~A. Müller, and B.~Houska.
\newblock A set-theoretic generalization of dissipativity with applications in {T}ube {MPC}.
\newblock {\em Automatica}, 122(109179), 2020.

\bibitem{CCTMPC_VILLANUEVA2024111543}
M.~E. Villanueva, M.~A. Müller, and B.~Houska.
\newblock {Configuration-Constrained Tube MPC}.
\newblock {\em Automatica}, 163:111543, 2024.

\bibitem{Ziegler2008-hp}
G.~M. Ziegler.
\newblock {\em Lectures on Polytopes}.
\newblock Graduate texts in mathematics. Springer, New York, NY, 2008.

\end{thebibliography}

\end{document}